\documentclass[english]{cccconf}
\usepackage{epstopdf}

\usepackage{amsmath,amssymb,amsfonts}
\usepackage{amsthm}
\usepackage[ruled,linesnumbered]{algorithm2e}
\usepackage{array}
\usepackage{booktabs}
\usepackage{color}
\usepackage{cite}
\usepackage{cuted}
\usepackage{enumitem}
\usepackage{graphicx}
\usepackage{hyperref}
\usepackage{mathrsfs}
\usepackage{multirow}
\usepackage[caption=false,font=scriptsize,labelfont=sf,textfont=sf]{subfig}
\usepackage{textcomp}
\usepackage{threeparttable}
\usepackage{stfloats}
\usepackage{upgreek}
\usepackage{url}
\usepackage{verbatim}

\newcommand\dif{\mathrm{d}}
\newtheorem{definition}{Definition}
\newtheorem{theorem}{Theorem}

\begin{document}

\title{Optimal Investment with Herd Behaviour Using Rational Decision Decomposition}

\author{Huisheng Wang\aref{thu} and H. Vicky Zhao\aref{thu}}

\affiliation[thu]{Department of Automation,
        Tsinghua University, Beijing 100084, P.~R.~China
        \email{\href{whs22@mail.tsinghua.edu.cn}{whs22@mail.tsinghua.edu.cn}, \href{vzhao@tsinghua.edu.cn}{vzhao@tsinghua.edu.cn}}}
        
\maketitle

\begin{abstract}
In this paper, we study the optimal investment problem considering the herd behaviour between two agents, including one leading expert and one following agent whose decisions are influenced by those of the leading expert. In the objective functional of the optimal investment problem, we introduce the average deviation term to measure the distance between the two agents' decisions and use the variational method to find its analytical solution. To theoretically analyze the impact of the following agent's herd behaviour on his/her decision, we decompose his/her optimal decision into a convex linear combination of the two agents' rational decisions, which we call the rational decision decomposition. Furthermore, we define the weight function in the rational decision decomposition as the following agent's investment opinion to measure the preference of his/her own rational decision over that of the leading expert. We use the investment opinion to quantitatively analyze the impact of the herd behaviour, the following agent's initial wealth, the excess return, and the volatility of the risky asset on the optimal decision. We validate our analyses through numerical experiments on real stock data. This study is crucial to understanding investors' herd behaviour in decision-making and designing effective mechanisms to guide their decisions.
\end{abstract}

\keywords{Herd behaviour, Investment opinion, Optimal investment, Rational decision decomposition, Variational method}

\section{Introduction}
There exists a category of leading experts in the financial markets, such as financial analysts, fund managers, or other professionals, who share their decisions through channels like social media and websites \cite{bodnaruk2015financial}. As these leading experts' decisions are often considered valuable investment guides, investors, referred to as agents below, tend to mimic their decisions \cite{brown2008neighbors}. Additionally, agents may have an intrinsic preference for conformity, and lean towards the decisions of the other agents to build consensus and trust \cite{hong2004social}. This phenomenon is commonly known as herd behaviour in behavioural finance \cite{eguiluz2000transmission, alfarano2005estimation, hu2016study}. Prior studies have qualitatively verified the existence of herd behaviour in agents' decisions in financial markets \cite{chang2000examination, chong2017explains, yu2019detection, wanidwaranan2022unintentional, zhou2022internet}. However, to quantitatively analyze the influence of herd behaviour on agents' decisions, it is necessary to employ mathematical models based on the optimal investment theory.

Optimal investment theory provides a theoretical framework to study how agents dynamically adjust their investment decisions throughout the investment period to maximize the return and minimize the volatility using the stochastic optimal control theory \cite{1952Portfolio, samuelson1975lifetime, calafiore2008multi, yang2019risk, li2022kind, aljalal2022optimal, alasmi2023optimal}. The Merton problem is one of the classical optimal investment problems \cite{merton1969lifetime}, where the agent allocates a certain amount of money to a risky asset at each moment to maximize the expected utility of the terminal wealth, i.e., the wealth at the end of the investment period. There have been many quantitative studies analyzing the impact of interactions among agents on the optimal investment. The previous studies in \cite{gali1994keeping, gollier2004misery, lauterbach2004keeping, gomez2007impact, gomez2009implications, roussanov2010diversification, garcia2011relative, levy2015keeping} examined the impact of relative wealth and consumption among agents on their decisions, where the objective was a functional of each agent's excess wealth or consumption relative to other agents during the investment period. However, to the best of our knowledge, few studies have investigated how herd behaviour affects agents' decisions on the allocation of risky assets in the Merton problem.

In this paper, we study an optimal investment problem involving two agents, one leading expert, and one following agent whose decisions are influenced by those of the leading expert, and we quantitatively analyze the impact of herd behaviour on the following agent's decision. The prior work in \cite{merigo2011induced} used the traditional Euclidean distance to quantitatively measure the disparity between agents' decisions. In this paper, we incorporate an exponential decay term into the traditional Euclidean distance to reflect the fact that agents often give less weight to future rewards when making decisions \cite{frederick2002time}, and we call the modified Euclidean distance the average deviation. We introduce the average deviation into the objective functional of the traditional Merton problem and employ the variational method to analytically solve the stochastic optimal control problem and find the following agent's optimal decision.

Note that with the above analytical solution of the following agent's optimal decision, it is theoretically difficult to directly examine the influence of herd behaviour due to its complicated expression. To address this problem, we introduce a method called rational decision decomposition to transform the following agent's optimal decision into a convex linear combination of the two agents' rational decisions, i.e., the optimal decisions in the traditional Merton problem, where the agents make independent decisions without the influence of other agents. Furthermore, we define the weight function as the following agent's investment opinion, which indicates the extent to which the following agent's decision aligns with his/her own rational decision. We use the following agent's investment opinion to quantitatively analyze the impact of the herd behaviour, the following agent's initial wealth, the excess return, and the volatility of the risky asset on the optimal decision.

The rest of this paper is organized as follows. We introduce the average deviation to quantify the disparity between the two agents' decisions, and present the dual-agent optimal investment problem considering herd behaviour in Section \ref{sec:decision}. We derive the analytical solution of the following agent's optimal decision using the variational method in Section \ref{sec:solution}, and quantitatively analyze the impact of herd behaviour on the following agent's optimal decisions using the rational decision decomposition in Section \ref{sec:analysis}. In Section \ref{sec:experiment}, we conduct numerical experiments on a real stock dataset to validate our analyses. Section \ref{sec:conclusion} is the conclusion.

\section{Problem Formulation}
\label{sec:decision}

\subsection{The Traditional Merton Problem}
Following the prior work in \cite{merton1969lifetime}, we consider the problem of an agent investing in the period $\mathcal{T}:=[0, T]$ in a financial market consisting of a risk-free asset and a risky asset. Let $(\Omega,\mathscr{F},\{\mathscr{F}(t)\}_{t\in\mathcal{T}},\mathbb{P})$ be a complete filtered probability space on which a standard Brownian motion $\{W(t)\}_{t\in\mathcal{T}}$ is defined. The price process of the risk-free asset $\{B(t)\}_{t\in\mathcal{T}}$ and the risky asset $\{S(t)\}_{t\in\mathcal{T}}$ are given by
\begin{align}
    \dif B(t)&=rB(t) \dif t\quad\mbox{and}\\
    \dif S(t)&=\mu S(t)\dif t+\sigma S(t)\dif W(t),
\end{align}
respectively, where $r$ is the interest rate of the risk-free asset, and $\mu$ and $\sigma$ are the appreciation rate and the volatility of the risky asset. We define $v=\mu-r>0$ as the excess return rate of the risky asset. Let $x$ denote the agent's initial wealth, and let $P(t)$ denote the amount of the risky asset held by the agent at time $t$. According to \cite{merton1969lifetime}, the agent's wealth process $\{X(t)\}_{t\in\mathcal{T}}$ can be expressed as 
\begin{equation}
    \dif X(t)=[rX(t)+vP(t)]\dif t+\sigma P(t)\dif W(t),
    \label{eq:sde}
\end{equation}
subject to $X(0)=x$. In the traditional Merton problem, the agent determines $\{P(t)\}_{t\in\mathcal{T}}$ to maximize the expected utility of the terminal wealth $\mathbb{E}\phi(X(T))$. The utility function $\phi(X(T))$ satisfies the characteristics of diminishing marginal returns and concavity \cite{pratt1978risk}. In this paper, we consider the Constant Absolute Risk Aversion form \cite{pratt1978risk}:
\begin{equation}
    \phi(X(T))=-\frac{1}{\alpha}\mathrm{e}^{-\alpha X(T)},
\end{equation}
where $\alpha>0$ is referred to as the \textit{risk aversion coefficient}. 
As the risk aversion coefficient $\alpha$ increases, the utility function becomes more sensitive to changes in the terminal wealth. 

In summary, the traditional Merton problem becomes
\begin{align*}
    \text{\textbf{Problem 1.}}&\sup_{\{P(t)\}_{t\in\mathcal{T}}\in\mathcal{U}}\mathbb{E}\phi(X(T))\\
    \text{s.t.\quad}&\dif X(t)=[rX(t)+vP(t)]\dif t+\sigma P(t)\dif W(t),\\
    &X(0)=x,
\end{align*}
which is a stochastic optimal control problem. In Problem 1, $\mathcal{U}$ is the set of admissible decisions, which is a subset of $\mathcal{L}^1:=\left\{\{u(t)\}_{t\in\mathcal{T}} \middle|\mathbb{E}\int_0^T|u(t)|\dif t<+\infty\right\}$.

According to \cite{merton1969lifetime}, the agent's optimal decision $\{\bar{P}(t)\}_{t\in\mathcal{T}}$ in Problem 1 is given by
\begin{equation}
    \bar{P}(t)=\frac{v}{\alpha\sigma^2}\mathrm{e}^{r(t-T)},t\in\mathcal{T}.
    \label{eq:merton-optimal-decision}
\end{equation}
We call \eqref{eq:merton-optimal-decision} the agent's \textit{rational decision}. According to \eqref{eq:merton-optimal-decision}, the rational amount of the risky asset is proportional to the excess return rate $v$ and inversely proportional to the volatility $\sigma$ and the risk aversion coefficient $\alpha$.

\subsection{Optimal Investment with Herd Behaviour}
Given the traditional Merton problem, we study the optimal investment problem with consideration of herd behaviour. The problem involves two agents, A1 and A2, where A1 is the following agent and A2 is the leading expert. We denote $\alpha_1$ and $\alpha_2$ as their risk aversion coefficients, respectively. We assume that A2's decision follows the form of the rational decision $\{\bar{P}_2(t)\}_{t\in\mathcal{T}}$ as in \eqref{eq:merton-optimal-decision}, and A1's decision $\{P_1(t)\}_{t\in\mathcal{T}}$ is unilaterally influenced by A2's decision. Considering the herd behaviour, A1 aims to maximize his/her expected utility of the terminal wealth while minimizing the distance between his/her decision and A2's decision. The prior work in \cite{merigo2011induced} used the Euclidean distance to measure the distance between two decisions, i.e., $\frac{1}{2}\int_0^T[P_1(t)-\bar{P}_2(t)]^2\dif t$ in continuous-time format. In this paper, we introduce an exponential decay term $\mathrm{e}^{\rho r(T-t)}$ to reflect the fact that agents often give less weight to future rewards when making decisions, and we call the modified Euclidean distance the \textit{average deviation}.
\begin{definition}\label{def:1}
    The average deviation between A1 and A2's decisions $\{P_1(t)\}_{t\in\mathcal{T}}$ and $\{\bar{P}_2(t)\}_{t\in\mathcal{T}}$ is defined as:
\begin{equation}
    D[P_1\Vert \bar{P}_2]=\frac{1}{2}\int_0^T\mathrm{e}^{\rho r(T-t)}[P_1(t)-\bar{P}_2(t)]^2\dif t,
    \label{eq:herd-cost}
\end{equation}
where $\rho\geqslant0$ is the decay rate. 
\end{definition}
A higher decay rate implies that deviations occurring later carry less weight. When $\rho=0$, deviations for all times are equally weighted, and \eqref{eq:herd-cost} becomes the same as the traditional Euclidean distance in \cite{merigo2011induced}. Let $\{X_1(t)\}_{t\in\mathcal{T}}$ denote A1's wealth process. 

With Definition \ref{def:1}, we define the following optimal investment problem for A1's decision:
\begin{align*}
    \text{\textbf{Problem 2.}}&\sup_{\{P_1(t)\}_{t\in\mathcal{T}}\in\mathcal{U}}\mathbb{E}\phi(X_1(T))-\theta D[P_1\Vert \bar{P}_2]\\
    \text{s.t. \quad}&\dif X_1(t)=[rX_1(t)+vP_1(t)]\dif t+\sigma P_1(t)\dif W(t),\\
    &X_1(0)=x_1,
\end{align*}
where the herd coefficient $\theta>0$ is to address the tradeoff between the two different objectives, i.e., maximizing the expected utility of the terminal wealth $\mathbb{E}\phi(X_1(T))$ and minimizing the average deviation $D[P_1\Vert \bar{P}_2]$. When $\theta=0$, A1's optimal decision is entirely independent of A2's decision, and Problem 2 degenerates into the traditional Merton Problem, i.e., $\lim_{\theta\to0}P_1^*(t)=\bar{P}_1(t)$ for all $t\in\mathcal{T}$. As $\theta$ approaches infinity, A1's optimal decision is equal to A2's decision, i.e., $\lim_{\theta\to+\infty}P_1^*(t)=\bar{P}_2(t)$ for all $t\in\mathcal{T}$. 

\section{Solution to the Optimal Investment Problem}
\label{sec:solution}
Next, we use the variational method to solve Problem 2. For the convenience of expression, we define $\vartheta:=\frac{\theta}{\alpha_1\sigma^2}$ and $\varrho:=2-\rho$ as the modified herd coefficient and decay rate, respectively.

\begin{theorem}\label{the:1}
The optimal decision $\{P_1^*(t)\}_{t\in\mathcal{T}}$ in Problem 2 is given by
\begin{equation}
    P_1^*(t)=\frac{\eta\alpha_2\sigma^2\mathrm{e}^{\varrho r(T-t)}+\theta}{\eta\alpha_1\sigma^2\mathrm{e}^{\varrho r(T-t)}+\theta}\cdot \frac{v}{\alpha_2\sigma^2}\mathrm{e}^{r(t-T)},t\in\mathcal{T},
    \label{eq:optimal-decision-one}
\end{equation}
where $\eta$ is the integral constant calculated by Algorithm \ref{alg:1}.
\end{theorem}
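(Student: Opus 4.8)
The plan is to solve Problem 2 by the variational method, treating it as a calculus-of-variations problem over the admissible control $\{P_1(t)\}_{t\in\mathcal{T}}$. First I would write the objective functional explicitly as a function of the control path. Using the linear SDE \eqref{eq:sde} for $X_1$, the terminal wealth $X_1(T)$ can be represented in closed form via the integrating factor $\mathrm{e}^{rt}$: one gets $X_1(T)=x_1\mathrm{e}^{rT}+\int_0^T\mathrm{e}^{r(T-t)}vP_1(t)\,\dif t+\int_0^T\mathrm{e}^{r(T-t)}\sigma P_1(t)\,\dif W(t)$. Since the stochastic integral is a (conditionally) Gaussian random variable given the deterministic control, $\mathbb{E}\phi(X_1(T))=-\frac{1}{\alpha_1}\mathbb{E}\,\mathrm{e}^{-\alpha_1 X_1(T)}$ can be evaluated using the moment generating function of a normal variable, yielding an expression depending only on the two Lebesgue integrals $\int_0^T\mathrm{e}^{r(T-t)}vP_1(t)\,\dif t$ and $\int_0^T\mathrm{e}^{2r(T-t)}\sigma^2 P_1(t)^2\,\dif t$. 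Combining with the average-deviation penalty $-\theta D[P_1\Vert\bar P_2]$ from Definition \ref{def:1}, the whole objective becomes a deterministic functional $J[P_1]$ of the form $-\frac{1}{\alpha_1}\exp\!\bigl(-\alpha_1 x_1\mathrm{e}^{rT}-\alpha_1 a[P_1]+\tfrac{\alpha_1^2}{2}b[P_1]\bigr)-\theta D[P_1\Vert\bar P_2]$, where $a$ is linear and $b$ is quadratic in $P_1$.

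Next I would compute the Gateaux derivative of $J$ in an arbitrary admissible direction $h$ and set it to zero, i.e. impose the first-order (Euler--Lagrange) condition pointwise in $t$. Because $a$, $b$ and $D$ are all integral functionals with integrands depending on $P_1(t)$ but not its derivative, the stationarity condition is algebraic in $P_1(t)$ for each fixed $t$: differentiating under the integral sign gives a relation of the form $\lambda\bigl(v\mathrm{e}^{r(T-t)}-\alpha_1\sigma^2\mathrm{e}^{2r(T-t)}P_1(t)\bigr)=\theta\mathrm{e}^{\rho r(T-t)}\bigl(P_1(t)-\bar P_2(t)\bigr)$, where $\lambda>0$ is the (so-far unknown) value of the exponential prefactor $-\alpha_1 J_{\text{utility}}$ evaluated at the optimum — a single scalar, since the exponent is the same for all $t$. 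Solving this linear equation for $P_1(t)$ and substituting $\bar P_2(t)=\frac{v}{\alpha_2\sigma^2}\mathrm{e}^{r(t-T)}$ from \eqref{eq:merton-optimal-decision}, then collecting the $\mathrm{e}^{r(T-t)}$ powers and writing $\varrho=2-\rho$, produces exactly \eqref{eq:optimal-decision-one} with the scalar $\eta$ playing the role of $\lambda$ (up to the constants absorbed into the definition $\vartheta=\theta/(\alpha_1\sigma^2)$). Concavity of $\phi$ and of the quadratic penalty, together with the strict concavity of $J$ in $P_1$, guarantees this stationary point is the unique maximizer over $\mathcal{U}$.

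The remaining point is that $\eta$ is not free: it must be consistent with its own definition, i.e. $\eta$ equals the exponential utility prefactor evaluated at the candidate $P_1^*$, which itself depends on $\eta$. Substituting \eqref{eq:optimal-decision-one} back into the expressions for $a[P_1^*]$ and $b[P_1^*]$ gives a single scalar fixed-point equation $\eta=F(\eta)$, where $F$ involves the two integrals $\int_0^T\mathrm{e}^{r(T-t)}vP_1^*(t)\,\dif t$ and $\int_0^T\mathrm{e}^{2r(T-t)}\sigma^2 P_1^*(t)^2\,\dif t$ as functions of $\eta$. This transcendental equation generally has no closed form, which is why it is resolved numerically — this is the content of Algorithm~\ref{alg:1}, and I would show $F$ is a contraction (or monotone and bounded) so that the iteration converges and $\eta$ is well defined. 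The main obstacle is precisely this self-consistency loop: one has to argue that the Euler--Lagrange equation, which a priori contains the unknown optimal value of the functional through $\lambda$, closes up into a genuine one-dimensional equation for $\eta$ admitting a unique solution, and then verify that the resulting control lies in $\mathcal{U}$ (square-integrability of $P_1^*$, which is immediate since it is a bounded deterministic function of $t$ on the compact interval $\mathcal{T}$) and that it is the global optimum rather than merely a critical point.
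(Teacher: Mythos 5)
Your proposal is correct and follows essentially the same route as the paper: represent $X_1(T)$ via the integrating factor, evaluate $\mathbb{E}\phi(X_1(T))$ as a lognormal expectation to reduce the objective to a deterministic functional $J[P_1]$, set the first variation to zero to obtain a pointwise linear equation in $P_1(t)$ whose coefficient $\eta$ is the exponential prefactor at the optimum, and close the loop with the scalar fixed-point equation that Algorithm \ref{alg:1} solves (the paper checks global optimality via the sign of the second variation, and defers the contraction argument to Theorem \ref{the:2}, exactly as you anticipate).
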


\begin{algorithm}[t]
	\caption{Iterative Method of the Integral Constant $\eta$.}
	\label{alg:1}
	\KwIn{Interest rate: $r$; Excess return rate: $v$; Volatility: $\sigma$; Initial wealth $x_1$; Risk aversion coefficients: $\alpha_1,\alpha_2$; Investment period: $T$; Modified herd coefficient: $\vartheta$; Modified decay rate: $\varrho$; Tolerance: $\varepsilon$.}
	\KwOut{Integral constant: $\eta$.}  
	\BlankLine
	$\eta_0=\exp\left\{-\alpha_1x_1\mathrm{e}^{rT}-\frac{v^2T}{2\sigma^2}\right\}$, $\Delta \eta_0=+\infty$, $k=0$;

	\While{$\Delta\eta_k\geqslant\varepsilon$}{$\eta_{k+1}=\eta_0\exp\left\{\int_0^T\frac{\vartheta^2v^2\left(\alpha_1/\alpha_2-1\right)^2\dif t}{2\sigma^2\left[\eta_k\mathrm{e}^{\varrho r(T-t)}+\vartheta\right]^2}\right\}$;
 
$\Delta\eta_{k+1}=\left|\eta_{k+1}-\eta_k\right|$;
    
    $k\leftarrow k+1$;}
	$\eta\approx\eta_k$ ($\left|\eta-\eta_k\right|<\varepsilon$).
\end{algorithm}

\begin{proof}
First, by solving \eqref{eq:sde}, A1's terminal wealth is
\begin{equation}
    X_1(T)=x_1\mathrm{e}^{rT}+\int_0^T\mathrm{e}^{r(T-t)}P_1(t)[v\dif t+\sigma\dif W(t)].
\end{equation}
Therefore, the terminal wealth $X_1(T)$ is a normal random variable whose expectation and variance are
\begin{align}
    \mathbb{E}X_1(T)&=x_1\mathrm{e}^{rT}+v\int_0^T\mathrm{e}^{r(T-t)}P_1(t)\dif t\quad\mbox{and}\\
    \mathbb{D}X_1(T)&=\sigma^2\int_0^T\mathrm{e}^{2r(T-t)} P_1^2(t)\dif t,
\end{align}
respectively. Because the set of admissible decisions $\mathcal{U}$ is a subset of $\mathcal{L}^1$, we have $\mathbb{E}X_1(t)<\infty$ and $\mathbb{D}X_1(t)<\infty$. Because $\phi(X_1(T))$ is an exponential function of $X_1(T)$, it is a lognormal random variable whose expectation is
\begin{equation}
    \mathbb{E}\phi(X_1(T))=-\frac{1}{\alpha_1}\exp\left\{-\alpha_1\mathbb{E}X_1(T)+\frac{\alpha_1^2}{2}\mathbb{D}X_1(T)\right\}.
\end{equation}
Therefore, the objective functional of Problem 2 can be expressed as a functional of $\{P_1(t)\}_{t\in\mathcal{T}}$, i.e.,
\begin{align}
    J[P_1]&:=-\frac{1}{\alpha_1}\exp\left\{-\alpha_1x_1\mathrm{e}^{rT}-\alpha_1v\int_0^T\mathrm{e}^{r(T-t)}P_1(t)\dif t\right.\notag\\
    &+\left.\frac{\alpha_1^2\sigma^2}{2}\int_0^T\mathrm{e}^{2r(T-t)}P_1^2(t)\dif t\right\}\notag\\
    &-\frac{\theta}{2}\int_0^T\mathrm{e}^{\rho r(T-t)}[P_1(t)-\bar{P}_2(t)]^2\dif t.
\end{align}
According to the variational method, the first and second variations are
\begin{align}
    \delta J
    &=\exp\left\{-\alpha_1x_1\mathrm{e}^{rT}-\alpha_1v\int_0^T\mathrm{e}^{r(T-t)}P_1(t)\dif t\right.\notag\\
    &+\left.\frac{\alpha_1^2\sigma^2}{2}\int_0^T\mathrm{e}^{2r(T-t)}P_1^2(t)\dif t\right\}\notag\\
    &\cdot\int_0^T\left[v\mathrm{e}^{r(T-t)}-\alpha_1\sigma^2\mathrm{e}^{2r(T-t)}P_1(t)\right]\delta P_1(t)\dif t\notag\\
    &-\theta\int_0^T\mathrm{e}^{\rho r(T-t)}[P_1(t)-\bar{P}_2(t)]\delta P_1(t)\dif t\quad\mbox{and}
\end{align}
\begin{align}
    \delta^2J
    &=-\exp\left\{-\alpha_1x_1\mathrm{e}^{rT}-\alpha_1v\int_0^T\mathrm{e}^{r(T-t)}P_1(t)\dif t\right.\notag\\
    &+\left.\frac{\alpha_1^2\sigma^2}{2}\int_0^T\mathrm{e}^{2r(T-t)}P_1^2(t)\dif t\right\}\cdot\alpha_1\left\{\int_0^T\left[v\mathrm{e}^{r(T-t)}\right.\right.\notag\\
    &-\left.\left.\alpha_1\sigma^2\mathrm{e}^{2r(T-t)}P_1(t)\right]\delta P_1(t)\dif t\right\}^2\notag\\
    &-\int_0^T\left[\exp\left\{-\alpha_1x_1\mathrm{e}^{rT}-\alpha_1v\int_0^T\mathrm{e}^{r(T-t)}P_1(t)\dif t\right.\right.\notag\\
    &+\left.\left.\frac{\alpha_1^2\sigma^2}{2}\int_0^T\mathrm{e}^{2r(T-t)}P_1^2(t)\dif t\right\}\cdot\alpha_1\sigma^2\mathrm{e}^{2r(T-t)}\right.\notag\\
    &+\left.\theta\mathrm{e}^{\rho r(T-t)}\right][\delta P_1(t)]^2\dif t<0.
    \label{eq:2delta}
\end{align}
According to the Jacobi condition, the sufficient and necessary condition for the functional supremum is given by $\delta J=0$ for $\{P_1^*(t)\}_{t\in\mathcal{T}}$. Therefore, we have
\begin{equation}
    P_1^*(t)=\frac{\eta\alpha_2\sigma^2\mathrm{e}^{\varrho r(T-t)}+\theta}{\eta\alpha_1\sigma^2\mathrm{e}^{\varrho r(T-t)}+\theta}\cdot\bar{P}_2(t),t\in\mathcal{T}.
    \label{eq:optimal-decision-one-P2}
\end{equation}
Given $\{\bar{P}_2(t)\}_{t\in\mathcal{T}}$ as in \eqref{eq:merton-optimal-decision}, we have \eqref{eq:optimal-decision-one}, where the integral constant is defined as
\begin{align}
    \eta&:=-\alpha_1\mathbb{E}\phi(X_1^*(T))\notag\\
    &=\exp\left\{-\alpha_1x\mathrm{e}^{rT}-\alpha_1v\int_0^T\mathrm{e}^{r(T-t)}P_1^*(t)\dif t\right.\notag\\
    &+\left.\frac{\alpha_1^2\sigma^2}{2}\int_0^T\mathrm{e}^{2r(T-t)}P_1^{*2}(t)\dif t\right\}.
    \label{eq:int}
\end{align}
From \eqref{eq:int}, $\eta$ is proportional to the utility of the optimal terminal wealth $X_1^*(T)$, the terminal wealth under the optimal decision $\{P_1^*(t)\}_{t\in\mathcal{T}}$, and the weight is $-\alpha_1<0$. Therefore, the integral constant $\eta$ is a decreasing function of the expected utility of the terminal wealth $\mathbb{E}\phi(X_1(T))$. By substituting \eqref{eq:optimal-decision-one} into \eqref{eq:int}, we have
\begin{equation}
    \eta=\exp\left\{-\alpha_1x_1\mathrm{e}^{rT}-\frac{v^2T}{2\sigma^2}+\int_0^T\frac{\vartheta^2v^2\left(\alpha_1/\alpha_2-1\right)^2\dif t}{2\sigma^2\left[\eta \mathrm{e}^{\varrho r(T-t)}+\vartheta\right]^2}\right\}.
    \label{eq:eta-equation}
\end{equation}
However, \eqref{eq:eta-equation} has no closed-form solution. Thus, we propose Algorithm 1 to calculate its numerical solution.
\end{proof}

According to the Fixed Point Theorem, we prove the convergence of Algorithm \ref{alg:1} in Theorem \ref{the:2}. 

\begin{theorem}
\label{the:2}
Algorithm \ref{alg:1} converges to $\eta$ if
\begin{equation}
    \frac{\vartheta^2v^2\left(\alpha_1-\alpha_2\right)^2\overline{\eta}}{\alpha_2^2\sigma^2\underline{\eta}^3}\cdot\frac{1-\mathrm{e}^{-2\varrho rT}}{2\varrho r}\leqslant1,
    \label{eq:convergence-domain}
\end{equation}
where the lower bound is $\underline{\eta}=\exp\left\{-\alpha_1x_1\mathrm{e}^{rT}-\frac{v^2T}{2\sigma^2}\right\}$, and the upper bound is $\overline{\eta}=\underline{\eta}\exp\left\{\int_0^T\frac{\vartheta^2v^2\left(\alpha_1/\alpha_2-1\right)^2\dif t}{2\sigma^2\left[\underline{\eta}\mathrm{e}^{\varrho r(T-t)}+\vartheta\right]^2}\right\}$.
\end{theorem}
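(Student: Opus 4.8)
The plan is to read Algorithm \ref{alg:1} as the fixed point iteration $\eta_{k+1}=g(\eta_k)$ associated with \eqref{eq:eta-equation}, where
\begin{equation*}
    g(\eta):=\underline{\eta}\exp\left\{\int_0^T\frac{\vartheta^2v^2\left(\alpha_1/\alpha_2-1\right)^2\dif t}{2\sigma^2\left[\eta\mathrm{e}^{\varrho r(T-t)}+\vartheta\right]^2}\right\}
\end{equation*}
and the starting point is $\eta_0=\underline{\eta}$. Since \eqref{eq:eta-equation} is precisely the equation $\eta=g(\eta)$, and since the integral constant $\eta$ of \eqref{eq:int} satisfies $\eta=g(\eta)\geqslant\underline{\eta}$ (the exponent being nonnegative), it suffices to prove that $g$ maps the closed interval $\mathcal{I}:=[\underline{\eta},\overline{\eta}]$ into itself and is a contraction on it; the Banach fixed point theorem then gives a unique fixed point in $\mathcal{I}$, necessarily equal to $\eta$, to which the iterates $\eta_k$ converge.

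I would first dispose of the self-mapping property, which needs no hypothesis: the integrand in the exponent of $g$ is nonnegative, so $g(\eta)\geqslant\underline{\eta}$ for every $\eta>0$; and that integrand is non-increasing in $\eta$, so $g$ is non-increasing, whence $g(\eta)\leqslant g(\underline{\eta})=\overline{\eta}$ for $\eta\geqslant\underline{\eta}$. Thus $g(\mathcal{I})\subseteq\mathcal{I}$, and in particular the true $\eta$ lies in $\mathcal{I}$.

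The core of the argument is a uniform bound on $g'$ over $\mathcal{I}$. Writing $g(\eta)=\underline{\eta}\,\mathrm{e}^{h(\eta)}$ with $h$ the integral in the exponent, and differentiating under the integral sign (legitimate because the integrand and its $\eta$-derivative are continuous and bounded on $[0,T]\times\mathcal{I}$), one gets $g'(\eta)=g(\eta)h'(\eta)$ with
\begin{equation*}
    |h'(\eta)|=\frac{\vartheta^2v^2\left(\alpha_1/\alpha_2-1\right)^2}{\sigma^2}\int_0^T\frac{\mathrm{e}^{\varrho r(T-t)}}{\left[\eta\mathrm{e}^{\varrho r(T-t)}+\vartheta\right]^3}\dif t.
\end{equation*}
Dropping the nonnegative term $\vartheta$ inside the bracket and using $\eta\geqslant\underline{\eta}$ bounds the integrand by $\underline{\eta}^{-3}\mathrm{e}^{-2\varrho r(T-t)}$; then $\int_0^T\mathrm{e}^{-2\varrho r(T-t)}\dif t=\frac{1-\mathrm{e}^{-2\varrho rT}}{2\varrho r}$, together with $g(\eta)\leqslant\overline{\eta}$ and $(\alpha_1/\alpha_2-1)^2=(\alpha_1-\alpha_2)^2/\alpha_2^2$, yields for all $\eta\in\mathcal{I}$
\begin{equation*}
    |g'(\eta)|\leqslant\frac{\vartheta^2v^2\left(\alpha_1-\alpha_2\right)^2\overline{\eta}}{\alpha_2^2\sigma^2\underline{\eta}^3}\cdot\frac{1-\mathrm{e}^{-2\varrho rT}}{2\varrho r},
\end{equation*}
which is exactly the left-hand side of \eqref{eq:convergence-domain}. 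Under that hypothesis this is at most $1$, so by the mean value theorem $g$ is a contraction on $\mathcal{I}$ (in fact a strict one, since $\vartheta>0$ makes the integrand estimate strict and $\mathcal{I}$ is compact), and the Banach fixed point theorem completes the proof.

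The step I expect to be the main obstacle is this uniform derivative bound, and in particular keeping the estimate sharp enough to land exactly on the left-hand side of \eqref{eq:convergence-domain}: it is essential that $\mathcal{I}$ is bounded away from $0$ below, so that $\underline{\eta}^{-3}$ controls the cubic denominator, and bounded above, so that $g(\eta)\leqslant\overline{\eta}$ controls the prefactor $g(\eta)$ — which is why the self-mapping step must come first. The rest, namely the elementary integral $\int_0^T\mathrm{e}^{-2\varrho r(T-t)}\dif t$ and the tracking of constants, is routine.
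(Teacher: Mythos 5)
Your proposal is correct and follows essentially the same route as the paper's own proof: the same iteration function, the same self-mapping argument on $[\underline{\eta},\overline{\eta}]$, the same derivative bound obtained by dropping $\vartheta$ in the cubic denominator, and the same elementary integral yielding exactly the left-hand side of \eqref{eq:convergence-domain}. Your explicit appeal to the Banach fixed point theorem and your remark that the contraction is strict (because $\vartheta>0$ makes the estimate strict) are, if anything, slightly more careful than the paper's wording.
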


\begin{proof}
We define the following iteration function:
\begin{equation}
    f(\xi):=\underline{\eta}\exp\left\{\int_0^T\frac{\vartheta^2v^2\left(\alpha_1/\alpha_2-1\right)^2\dif t}{2\sigma^2\left[\xi \mathrm{e}^{\varrho r(T-t)}+\vartheta\right]^2}\right\},
    \label{eq:iter}
\end{equation}
which satisfies $f(\eta)=\eta$, i.e., $\eta$ is a fixed point of the iteration function $f$. To prove the convergence, we need to find a self-mapping interval, and then prove that the iteration function satisfies the Lipschitz continuity condition on the self-mapping interval. First, it is obvious that $f(\xi)\geqslant\underline{\eta}$ for all $\xi\in[\underline{\eta},\overline{\eta}]$ because the exponential term in \eqref{eq:iter} is greater than $1$. Second, it can be easily proved that $f(\xi)$ is a monotonically decreasing function with respect to $\xi$ for all $\xi\in[\underline{\eta},\overline{\eta}]$, so we have $f(\xi)\leqslant f(\underline{\eta})=\overline{\eta}$ for all $\xi\in[\underline{\eta},\overline{\eta}]$. Therefore, there is always $f(\xi)\in[\underline{\eta},\overline{\eta}]$ for all $\xi\in[\underline{\eta},\overline{\eta}]$, i.e., $f:[\underline{\eta},\overline{\eta}]\mapsto [\underline{\eta},\overline{\eta}]$.
It should be noted that $f(\xi)$ is a continuously differentiable function on $[\underline{\eta},\overline{\eta}]$. To check the Lipschitz continuity condition, note that
\begin{align}
    |f'(\xi)|
    &=f(\xi)\int_0^T\frac{\vartheta^2v^2\left(\alpha_1/\alpha_2-1\right)^2\mathrm{e}^{\varrho r(T-t)}\dif t}{\sigma^2\left[\xi\mathrm{e}^{\varrho r(T-t)}+\vartheta\right]^3}\notag\\
    &<f(\xi)\int_0^T\frac{\vartheta^2v^2\left(\alpha_1/\alpha_2-1\right)^2\dif t}{\xi^3\sigma^2\mathrm{e}^{2\varrho r(T-t)}}\notag\\
    &=\frac{\vartheta^2v^2\left(\alpha_1-\alpha_2\right)^2f(\xi)}{\xi^3\alpha_2^2\sigma^2}\cdot\frac{1-\mathrm{e}^{-2\varrho rT}}{2\varrho r}.
\end{align}
If \eqref{eq:convergence-domain} holds, then we have
\begin{equation}
    |f'(\xi)|<\frac{\vartheta^2v^2\left(\alpha_1-\alpha_2\right)^2\overline{\eta}}{\alpha_2^2\sigma^2\underline{\eta}^3}\cdot\frac{1-\mathrm{e}^{-2\varrho rT}}{2\varrho r}\leqslant1
\end{equation}
for all $\xi\in[\underline{\eta},\overline{\eta}]$. Therefore, Algorithm \ref{alg:1} converges when the initial value $\xi\in[\underline{\eta},\overline{\eta}]$. Because the initial value in Algorithm \ref{alg:1} is $\eta_0=\underline{\eta}\in[\underline{\eta},\overline{\eta}]$, Algorithm \ref{alg:1} converges to $\eta$.
\end{proof}

\section{Impact of Herd Behaviour on Optimal Decision}
\label{sec:analysis}
\subsection{Rational Decision Decomposition and Investment Opinion}
Theorem \ref{the:1} and Algorithm \ref{alg:1} provide the analytical solution for A1's optimal decision in Problem 2. However, due to the complexity of the expression of \eqref{eq:optimal-decision-one}, it is complicated to theoretically analyze the impact of herd behaviour on A1's optimal decision. Because the objective functional in Problem 2 is a weighted average of the expected utility of terminal wealth $\mathbb{E}\phi(X_1(T))$ and the average deviation $D[P_1\Vert\bar{P}_2]$, an intuitive notion is that A1's optimal decision is a function of both A1 and A2's rational decisions. In Theorem \ref{the:6}, we prove that A1's optimal decision is a convex linear combination of the two agents' rational decisions, which we call the \textit{rational decision decomposition}.
\begin{theorem}\label{the:6}
    The optimal decision $\{P_1^*(t)\}_{t\in\mathcal{T}}$ in Problem 2 is a convex linear combination of the two agents' rational decisions $\{\bar{P}_1(t)\}_{t\in\mathcal{T}}$ and $\{\bar{P}_2(t)\}_{t\in\mathcal{T}}$, i.e., 
    \begin{equation}
    P_1^*(t)=Z_1(t)\bar{P}_1(t)+[1-Z_1(t)]\bar{P}_2(t),t\in\mathcal{T},
    \label{eq:rational-decomposition-one}
    \end{equation}
    where the weight function $\{Z_1(t)\}_{t\in\mathcal{T}}$
    is given by
    \begin{equation}
        Z_1(t)=\frac{\eta \mathrm{e}^{\varrho r(T-t)}}{\eta \mathrm{e}^{\varrho r(T-t)}+\vartheta},t\in\mathcal{T}.
        \label{equ:opinion}
    \end{equation}
\end{theorem}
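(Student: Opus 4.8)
The plan is to obtain \eqref{eq:rational-decomposition-one} by a direct rearrangement, and the cleanest route is to revisit the first-order optimality condition $\delta J=0$ already established in the proof of Theorem \ref{the:1} rather than to start from the unwieldy closed form \eqref{eq:optimal-decision-one}. Setting the integrand of $\delta J$ to zero pointwise and identifying the common exponential prefactor, evaluated at the optimum, with the integral constant $\eta$ of \eqref{eq:int}, the optimality condition reads
\begin{equation*}
\eta\bigl[v\mathrm{e}^{r(T-t)}-\alpha_1\sigma^2\mathrm{e}^{2r(T-t)}P_1^*(t)\bigr]=\theta\mathrm{e}^{\rho r(T-t)}\bigl[P_1^*(t)-\bar{P}_2(t)\bigr],\quad t\in\mathcal{T}.
\end{equation*}

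Next I would use that $\bar{P}_1$ satisfies $\alpha_1\sigma^2\mathrm{e}^{2r(T-t)}\bar{P}_1(t)=v\mathrm{e}^{r(T-t)}$, which is immediate from \eqref{eq:merton-optimal-decision}, so the left-hand side above becomes $\eta\alpha_1\sigma^2\mathrm{e}^{2r(T-t)}[\bar{P}_1(t)-P_1^*(t)]$. Abbreviating $a(t):=\eta\alpha_1\sigma^2\mathrm{e}^{2r(T-t)}$ and $b(t):=\theta\mathrm{e}^{\rho r(T-t)}$, the condition collapses to $a(t)[\bar{P}_1(t)-P_1^*(t)]=b(t)[P_1^*(t)-\bar{P}_2(t)]$, hence $P_1^*(t)=\frac{a(t)}{a(t)+b(t)}\bar{P}_1(t)+\frac{b(t)}{a(t)+b(t)}\bar{P}_2(t)$. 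Dividing numerator and denominator of $a(t)/(a(t)+b(t))$ by $\alpha_1\sigma^2\mathrm{e}^{\rho r(T-t)}$ and invoking $\vartheta=\theta/(\alpha_1\sigma^2)$ and $\varrho=2-\rho$ turns the weight into exactly $Z_1(t)$ of \eqref{equ:opinion}, which gives \eqref{eq:rational-decomposition-one}.

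To finish, I would verify the combination is genuinely convex: since $\eta>0$ (it is the exponential in \eqref{eq:int}), $\mathrm{e}^{\varrho r(T-t)}>0$, and $\vartheta>0$ because $\theta,\alpha_1,\sigma^2>0$, we get $Z_1(t)\in(0,1)$ for every $t\in\mathcal{T}$, while the two weights sum to $1$ by construction. An equivalent fully elementary alternative, if one prefers to stay with the explicit solution, is to start from \eqref{eq:optimal-decision-one-P2}, use $\bar{P}_1(t)=(\alpha_2/\alpha_1)\bar{P}_2(t)$, subtract $1$ from the scalar multiplier of $\bar{P}_2(t)$, factor $(\alpha_2-\alpha_1)$ out of the numerator, and recognize the resulting fraction as $Z_1(t)(\alpha_2/\alpha_1-1)$ after dividing through by $\alpha_1\sigma^2$. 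I do not anticipate a real obstacle: the argument is pure algebra once the first-order condition is in hand. The only point needing care is the bookkeeping of the substitutions $\vartheta=\theta/(\alpha_1\sigma^2)$ and $\varrho=2-\rho$, which is precisely what makes the fraction in \eqref{eq:optimal-decision-one} collapse to the clean weight \eqref{equ:opinion}, together with the one-line positivity check that legitimizes the word ``convex.''
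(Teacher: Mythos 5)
Your proposal is correct, and its primary route differs mildly but genuinely from the paper's. The paper proves Theorem \ref{the:6} by taking the closed form \eqref{eq:optimal-decision-one-P2} as given and solving for the weight as the ratio $Z_1(t)=\frac{P_1^*(t)-\bar{P}_2(t)}{\bar{P}_1(t)-\bar{P}_2(t)}$, then simplifying; this is exactly the ``fully elementary alternative'' you sketch at the end. Your main argument instead returns to the stationarity condition $\delta J=0$, sets its integrand to zero pointwise, identifies the exponential prefactor at the optimum with $\eta$ from \eqref{eq:int}, and uses the identity $\alpha_1\sigma^2\mathrm{e}^{2r(T-t)}\bar{P}_1(t)=v\mathrm{e}^{r(T-t)}$ to rewrite the condition as $a(t)[\bar{P}_1(t)-P_1^*(t)]=b(t)[P_1^*(t)-\bar{P}_2(t)]$, from which the convex combination falls out with weights $a/(a+b)$ and $b/(a+b)$; the algebra reducing $a(t)/(a(t)+b(t))$ to \eqref{equ:opinion} via $\vartheta=\theta/(\alpha_1\sigma^2)$ and $\varrho=2-\rho$ checks out. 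What your route buys is structural transparency: the decomposition is seen to be forced by the first-order condition itself --- the weights are manifestly positive and sum to one by construction --- rather than verified a posteriori from the closed form, and it makes clear why $\bar{P}_1$ (not just $\bar{P}_2$) appears even though only $\bar{P}_2$ enters the penalty term. What the paper's route buys is brevity and independence from the intermediate steps of Theorem \ref{the:1}'s proof: it needs only the published formula \eqref{eq:optimal-decision-one-P2}. Both arguments equally inherit the implicitness of $\eta$ (it still solves the fixed-point equation \eqref{eq:eta-equation}), and your positivity check $\eta,\vartheta>0\Rightarrow Z_1(t)\in(0,1)$ matches the paper's closing remark.
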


\begin{proof}
    By combining \eqref{eq:merton-optimal-decision}, \eqref{eq:optimal-decision-one-P2} and \eqref{eq:rational-decomposition-one}, we can obtain
    \begin{align}
        Z_1(t)&=\frac{P_1^*(t)-\bar{P}_2(t)}{\bar{P}_1(t)-\bar{P}_2(t)}=\frac{\frac{\eta\alpha_2\sigma^2\mathrm{e}^{\varrho r(T-t)}+\theta}{\eta\alpha_1\sigma^2\mathrm{e}^{\varrho r(T-t)}+\theta}\cdot\bar{P}_2(t)-\bar{P}_2(t)}{\frac{\alpha_2}{\alpha_1}\bar{P}_2(t)-\bar{P}_2(t)}\notag\\
        &=\frac{\eta\alpha_1\sigma^2\mathrm{e}^{\varrho r(T-t)}}{\eta\alpha_1\sigma^2\mathrm{e}^{\varrho r(T-t)}+\theta}=\frac{\eta \mathrm{e}^{\varrho r(T-t)}}{\eta \mathrm{e}^{\varrho r(T-t)}+\vartheta},
    \end{align}
    for all $t\in\mathcal{T}$, which is \eqref{equ:opinion}. From \eqref{equ:opinion}, the range of investment opinion is $(0,1)$, which indicates that A1's optimal decision is a convex linear combination of the two agents' rational decisions. 
\end{proof}

From \eqref{eq:rational-decomposition-one}, $\{Z_1(t)\}_{t\in\mathcal{T}}$ quantitatively describes the extent to which A1 adheres to his/her own rational decision, reflecting A1's opinion on how good his/her own rational decision is, and thereby influencing the amount of the risky asset. Therefore, we define it as A1's \textit{investment opinion}. A smaller $Z_1(t)$ indicates that A1's decision is more inclined to that of A2.

To analyze the temporal evolution of herd behaviour and its impact on A1's decision, it is essential to examine the dynamics of A1's investment opinion. In Theorem \ref{the:5}, we derive the differential equation of the investment opinion.

\begin{theorem}\label{the:5}
The differential equation of the investment opinion in \eqref{equ:opinion} is given by
\begin{align}
    \dot{Z}_1(t)&=-\varrho rZ_1(t)[1-Z_1(t)],t\in\mathcal{T},\label{eq:opinion-dynamics-one}\\
    Z_1(T)&=\frac{\eta}{\eta+\vartheta}.
    \label{eq:opinion-dynamics-one-t}
\end{align}
\end{theorem}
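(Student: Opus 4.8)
The plan is to verify the ODE directly by differentiating the closed-form expression \eqref{equ:opinion}, since $Z_1$ is already given explicitly. First I would introduce the auxiliary function $g(t):=\eta\mathrm{e}^{\varrho r(T-t)}$, so that $Z_1(t)=g(t)/[g(t)+\vartheta]$, and note that $g$ satisfies the linear ODE $\dot g(t)=-\varrho r\,g(t)$. This reduces the whole computation to one application of the quotient rule.

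Next I would compute
\begin{equation}
\dot Z_1(t)=\frac{\dot g(t)[g(t)+\vartheta]-g(t)\dot g(t)}{[g(t)+\vartheta]^2}=\frac{\vartheta\,\dot g(t)}{[g(t)+\vartheta]^2}=-\varrho r\,\frac{\vartheta\,g(t)}{[g(t)+\vartheta]^2},
\end{equation}
using $\dot g=-\varrho r g$ in the last step. Then I would factor the right-hand side as $-\varrho r\cdot\frac{g(t)}{g(t)+\vartheta}\cdot\frac{\vartheta}{g(t)+\vartheta}$ and identify the two factors: $\frac{g(t)}{g(t)+\vartheta}=Z_1(t)$ by definition, and $\frac{\vartheta}{g(t)+\vartheta}=1-Z_1(t)$. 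Substituting these gives exactly \eqref{eq:opinion-dynamics-one}, $\dot Z_1(t)=-\varrho r Z_1(t)[1-Z_1(t)]$. For the terminal condition I would simply evaluate \eqref{equ:opinion} at $t=T$, where $\mathrm{e}^{\varrho r(T-t)}=1$, obtaining $Z_1(T)=\eta/(\eta+\vartheta)$, which is \eqref{eq:opinion-dynamics-one-t}.

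There is no real obstacle here — the statement is a routine consequence of the explicit formula — so the only thing to be careful about is correctly recognizing the complementary weight $1-Z_1(t)=\vartheta/[\eta\mathrm{e}^{\varrho r(T-t)}+\vartheta]$, which is what turns the bare quotient-rule expression into the logistic (Bernoulli) form claimed. If one wanted an independent check, I would also note that \eqref{equ:opinion} is precisely the solution of the logistic ODE \eqref{eq:opinion-dynamics-one} with terminal data \eqref{eq:opinion-dynamics-one-t}, propagated backward in time, which confirms consistency.
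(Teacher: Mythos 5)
Your proposal is correct and follows essentially the same route as the paper: both differentiate the explicit formula \eqref{equ:opinion} directly and recognize the factors $Z_1(t)$ and $1-Z_1(t)$ to obtain the logistic form, then evaluate at $t=T$ for the terminal condition. Your auxiliary function $g(t)=\eta\mathrm{e}^{\varrho r(T-t)}$ is just a cleaner bookkeeping device for the same quotient-rule computation.
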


\begin{proof}
    According to \eqref{equ:opinion}, we have
    \begin{align}
        \dot{Z}_1(t)&=-\varrho r\frac{\eta \mathrm{e}^{\varrho r(T-t)}}{\eta \mathrm{e}^{\varrho r(T-t)}+\vartheta}+\varrho r\frac{\eta^2 \mathrm{e}^{2\varrho r(T-t)}}{[\eta \mathrm{e}^{\varrho r(T-t)}+\vartheta]^2}\notag\\
        &=-\varrho rZ_1(t)[1-Z_1(t)],
    \end{align}
    which is \eqref{eq:opinion-dynamics-one}. By substituting $T$ into \eqref{equ:opinion}, we have \eqref{eq:opinion-dynamics-one-t}.
\end{proof}

In the following, we reframe Problem 2 from the perspective of the investment opinion and show that Problem 2 is essentially equivalent to the following problem where A1 determines his/her optimal investment opinion.
\begin{align*}
    \text{\textbf{Problem 3.}}&\sup_{\{Z_1(t)\}_{t\in\mathcal{T}}\in\mathcal{U}}\mathbb{E}\phi(X_1(T))-\lambda I[Z_1]\\
    \text{s.t. \quad}&\dif X_1(t)=[rX_1(t)+vP_1(t)]\dif t+\sigma P_1(t)\dif W(t),\\
    &P_1(t)=Z_1(t)\bar{P}_1(t)+[1-Z_1(t)]\bar{P}_2(t),\\
    &X_1(0)=x_1,
\end{align*}
where $I[Z_1]=\frac{1}{2}\int_0^T\mathrm{e}^{\varrho r(t-T)}Z_1^2(t)\dif t$ and $\lambda$ is a constant.

\begin{theorem}\label{the:7}
    When $\lambda=\frac{\vartheta v^2(\alpha_1-\alpha_2)^2}{\alpha_2^2\sigma^2}$, Problem 2 and Problem 3 are equivalent, i.e., the optimal solutions $\{P_1^*(t)
    \}_{t\in\mathcal{T}}$ and $\{Z_1(t)\}_{t\in\mathcal{T}}$ are both given by \eqref{eq:optimal-decision-one} and \eqref{equ:opinion}.
\end{theorem}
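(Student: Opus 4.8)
The plan is to exhibit Problem 3 as a change of variables in Problem 2 under which the two objective functionals literally coincide. Throughout I assume $\alpha_1\neq\alpha_2$; if $\alpha_1=\alpha_2$ then $\bar P_1\equiv\bar P_2$, the constant $\lambda$ in the statement vanishes, and both problems trivially share the optimum $P_1^*=\bar P_1=\bar P_2$. First I would introduce the affine substitution $\Phi$ sending a process $\{Z_1(t)\}_{t\in\mathcal{T}}$ to $\{P_1(t)\}_{t\in\mathcal{T}}$ via $P_1(t)=Z_1(t)\bar P_1(t)+[1-Z_1(t)]\bar P_2(t)=\bar P_2(t)+Z_1(t)[\bar P_1(t)-\bar P_2(t)]$. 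Since $\bar P_1(t)-\bar P_2(t)=(v/\sigma^2)(1/\alpha_1-1/\alpha_2)\mathrm{e}^{r(t-T)}$ never vanishes on $\mathcal{T}$ and $\bar P_1,\bar P_2$ are bounded there, $\Phi$ is a bijection of $\mathcal{L}^1$ with inverse $P_1\mapsto Z_1=(P_1-\bar P_2)/(\bar P_1-\bar P_2)$, and it carries the admissible class $\mathcal{U}$ onto itself. Under $\Phi$ the state equation and initial condition in Problem 3 are exactly those of Problem 2, so $X_1(T)$, and hence $\mathbb{E}\phi(X_1(T))$, are unchanged.

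The heart of the argument is the identity $\theta D[P_1\Vert\bar P_2]=\lambda I[Z_1]$ for every admissible pair linked by $\Phi$, with $\lambda$ as in the statement. To establish it I would substitute $P_1(t)-\bar P_2(t)=Z_1(t)[\bar P_1(t)-\bar P_2(t)]$ into \eqref{eq:herd-cost}, use the explicit form \eqref{eq:merton-optimal-decision} of $\bar P_1,\bar P_2$ to factor out the time-independent constant $v^2(\alpha_1-\alpha_2)^2/(\alpha_1^2\alpha_2^2\sigma^4)$, and then collapse the remaining discount factor: $\mathrm{e}^{\rho r(T-t)}\cdot\mathrm{e}^{2r(t-T)}=\mathrm{e}^{-(2-\rho)r(T-t)}=\mathrm{e}^{\varrho r(t-T)}$ since $\varrho=2-\rho$. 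What is left is $\tfrac{1}{2}\int_0^T\mathrm{e}^{\varrho r(t-T)}Z_1^2(t)\dif t=I[Z_1]$, and after inserting $\vartheta=\theta/(\alpha_1\sigma^2)$ the proportionality constant matches $\lambda$ from the statement. This is the only genuine computation in the proof, and carrying the discount exponents and the $\alpha_i$ factors through \eqref{eq:herd-cost} without slip is where I would be most careful.

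Combining the two facts, for any $Z_1$ and $P_1=\Phi(Z_1)$ we get $\mathbb{E}\phi(X_1(T))-\lambda I[Z_1]=\mathbb{E}\phi(X_1(T))-\theta D[P_1\Vert\bar P_2]$: the Problem 3 objective at $Z_1$ equals the Problem 2 objective at $\Phi(Z_1)$. Since $\Phi$ is a bijection of the admissible sets, the two suprema agree and $Z_1$ maximizes the Problem 3 functional if and only if $\Phi(Z_1)$ maximizes the Problem 2 functional. By Theorem \ref{the:1} the latter maximizer is $\{P_1^*(t)\}_{t\in\mathcal{T}}$ in \eqref{eq:optimal-decision-one}; applying $\Phi^{-1}$ and invoking the computation in the proof of Theorem \ref{the:6} shows the Problem 3 maximizer is precisely $\{Z_1(t)\}_{t\in\mathcal{T}}$ in \eqref{equ:opinion}, whose image under $\Phi$ is \eqref{eq:optimal-decision-one}. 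Uniqueness of the stationary point as a genuine maximizer is inherited from Problem 2: in the $Z_1$ variable the second variation of the common functional equals $\delta^2 J$ from \eqref{eq:2delta} evaluated in the direction $(\bar P_1-\bar P_2)\delta Z_1$, which is strictly negative. I expect no conceptual obstacle here — only the need to keep the exponential and risk-aversion bookkeeping in $\theta D=\lambda I$ exact.
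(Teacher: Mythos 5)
Your proposal is correct and follows essentially the same route as the paper: the whole content is the identity $\theta D[P_1\Vert\bar P_2]=\lambda I[Z_1]$, obtained by substituting $P_1(t)-\bar P_2(t)=Z_1(t)[\bar P_1(t)-\bar P_2(t)]$ and collapsing $\mathrm{e}^{\rho r(T-t)}\mathrm{e}^{2r(t-T)}=\mathrm{e}^{\varrho r(t-T)}$, which is exactly the computation in the paper's proof; you merely make the underlying affine bijection and the invariance of the dynamics explicit. (One shared bookkeeping note: the constant actually evaluates to $\theta v^2(\alpha_1-\alpha_2)^2/(\alpha_1^2\alpha_2^2\sigma^4)=\vartheta v^2(\alpha_1-\alpha_2)^2/(\alpha_1\alpha_2^2\sigma^2)$, which differs from the stated $\lambda$ by a factor of $\alpha_1$ --- a slip present in the paper's own final line as well.)
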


\begin{proof}
    Substituting \eqref{eq:rational-decomposition-one} into Problem 2, we have
    \begin{align}
        \lambda&=\frac{\theta D[P_1\Vert\bar{P}_2]}{I[Z_1]}=\frac{\theta\int_0^T\mathrm{e}^{\rho r(T-t)}[P_1(t)-\bar{P}_2(t)]^2\dif t}{\int_0^T\mathrm{e}^{\varrho r(t-T)}Z_1^2(t)\dif t}\notag\\
        &=\frac{\theta\int_0^T\mathrm{e}^{\rho r(T-t)}[\bar{P}_1(t)-\bar{P}_2(t)]^2Z_1^2(t)\dif t}{\int_0^T\mathrm{e}^{\varrho r(t-T)}Z_1^2(t)\dif t}\notag\\
        &=\frac{\theta v^2(\alpha_1-\alpha_2)^2}{\alpha_1^2\alpha_2^2\sigma^4}=\frac{\vartheta v^2(\alpha_1-\alpha_2)^2}{\alpha_2^2\sigma^2}.
    \end{align}
    In this case, Problem 2 and Problem 3 are equivalent.
\end{proof}

\subsection{Parameters' Influence on the Optimal Decision}
With Theorem \ref{the:6} and Theorem \ref{the:5}, we can quantitatively analyze the impact of herd behaviour on A1's optimal decision. 

\subsubsection{Decay rate}
When $0\leqslant\rho<2$, we have $\dot{Z}_1(t)<0$ for all $t\in\mathcal{T}$, which indicates a monotonic decrease in A1's investment opinion over time. Consequently, A1's decision converges to that of A2 over time. This is because later deviations exert a significant impact when the decay rate is relatively small.

When $\rho=2$, we have $\dot{Z}_1(t)=0$ and $Z_1(t)=\frac{\eta}{\eta+\vartheta}$ for all $t\in\mathcal{T}$. Therefore, A1's optimal decision is a time-invariant convex linear combination of the two agents' rational decisions, i.e., $P_1^*(t)=\frac{\eta}{\eta+\vartheta}\bar{P}_1(t)+\frac{\vartheta}{\eta+\vartheta}\bar{P}_2(t)$ for all $t\in\mathcal{T}$.

When $\rho>2$, we have $\dot{Z}_1(t)>0$ for all $t\in\mathcal{T}$, which indicates that A1's investment opinion displays a monotonically increasing trend over time, while the level of convergence gradually diminishes. This is due to the relatively large decay rate of the average deviation, where later deviations have less impact. Hence, A1's decision increasingly tends toward his/her own rational decision, diverging from that of A2.

\subsubsection{Herd coefficient}
\begin{theorem}\label{the:8}
The investment opinion $Z_1(t)$ is a decreasing function of $\theta$ with $\frac{\partial Z_1(t)}{\partial \theta}<0$.
\end{theorem}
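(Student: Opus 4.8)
The plan is to reduce the claim to a single bound on the \emph{elasticity} $E:=\mathrm{d}\ln\eta/\mathrm{d}\ln\theta$ of the integral constant $\eta$ with respect to $\theta$. Writing $a(t):=\mathrm{e}^{\varrho r(T-t)}>0$, formula \eqref{equ:opinion} reads $Z_1(t)=1/\bigl(1+\vartheta/(\eta\,a(t))\bigr)$, so for each fixed $t$ the opinion $Z_1(t)$ is strictly decreasing in the ratio $\vartheta/\eta$. Since $\vartheta=\theta/(\alpha_1\sigma^2)$ while $\eta$ itself depends on $\theta$ through the fixed-point relation \eqref{eq:eta-equation}, differentiating \eqref{equ:opinion} directly gives
\begin{equation*}
\frac{\partial Z_1(t)}{\partial\theta}=-\frac{a(t)\,\eta\,(1-E)}{\alpha_1\sigma^2\bigl(\eta\,a(t)+\vartheta\bigr)^2},
\end{equation*}
so the statement is equivalent to proving $E<1$.

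The first thing to settle is that \eqref{eq:eta-equation} really defines $\eta$ as a differentiable function of $\theta$. Taking logarithms puts it in the form $\ln\eta=C+K\,I(\eta,\vartheta)$, where $C:=-\alpha_1x_1\mathrm{e}^{rT}-v^2T/(2\sigma^2)$ is independent of $\theta$, $K:=v^2(\alpha_1/\alpha_2-1)^2/(2\sigma^2)\geqslant0$, and $I(\eta,\vartheta):=\int_0^T\vartheta^2/(\eta\,a(t)+\vartheta)^2\,\mathrm{d}t$; the implicit function theorem then applies because the partial derivative of $\ln\eta-C-K\,I$ in $\eta$ equals $(1+h)/\eta>0$ (with $h$ as defined below), a positivity already implicit in the contraction estimate underlying Theorem \ref{the:2}. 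Next I would compute $E$ by implicit differentiation. Since $\vartheta$ is proportional to $\theta$, $E$ also equals $\mathrm{d}\ln\eta/\mathrm{d}\ln\vartheta$, and the key algebraic fact is that the Euler-type operators $\eta\,\partial_\eta$ and $\vartheta\,\partial_\vartheta$, applied to the integrand $\vartheta^2/(\eta a+\vartheta)^2$, produce exactly opposite results, both of magnitude $2\vartheta^2\eta a/(\eta a+\vartheta)^3$. Hence, setting $h:=K\int_0^T 2\vartheta^2\eta\,a(t)/(\eta\,a(t)+\vartheta)^3\,\mathrm{d}t\geqslant0$, one gets $\vartheta\,\partial_\vartheta(K\,I)=h$ and $\eta\,\partial_\eta(K\,I)=-h$; substituting these into the total-derivative identity $\vartheta\,\mathrm{d}(\ln\eta)/\mathrm{d}\vartheta=\vartheta\,\mathrm{d}(K\,I)/\mathrm{d}\vartheta$ collapses everything to the scalar equation $E=-hE+h$, i.e.\ $E=h/(1+h)$.

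Finally, since the integrand defining $h$ is continuous on $[0,T]$ and $\vartheta>0$ keeps its denominator bounded away from zero, $h$ is a finite nonnegative number, so $E=h/(1+h)\in[0,1)$; in particular $E<1$, and the displayed formula for $\partial Z_1(t)/\partial\theta$ then gives $\partial Z_1(t)/\partial\theta<0$ for every $t\in\mathcal{T}$, as claimed. The main obstacle — and the step that makes the whole argument close — is the exact cancellation $\eta\,\partial_\eta I=-\vartheta\,\partial_\vartheta I$; without it the sign of $1-E$ would be indeterminate, and the indirect dependence of $\eta$ on $\theta$ through the fixed-point equation would leave the monotonicity of $Z_1$ genuinely unclear. (A quick sanity check: when $\alpha_1=\alpha_2$ one has $K=h=0$, $\eta$ constant in $\theta$, and $\partial Z_1(t)/\partial\theta<0$ is immediate.)
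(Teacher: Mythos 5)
Your proof is correct, and it rests on the same structural fact the paper exploits: the integral in the fixed-point equation \eqref{eq:eta-equation} is homogeneous of degree zero in $(\eta,\vartheta)$, so $Z_1(t)$ depends on $\theta$ only through the ratio $\eta/\vartheta$. The paper makes this explicit by substituting $y:=\eta/\vartheta$ and arguing by contradiction that $\eta$ increases in $\vartheta$ and hence $y$ decreases; you instead encode the same homogeneity as the Euler-operator cancellation $\eta\,\partial_\eta I=-\vartheta\,\partial_\vartheta I$ and solve for the elasticity $E=h/(1+h)\in[0,1)$. Your execution buys two things the paper glosses over: an explicit invocation of the implicit function theorem (so that $\partial\eta/\partial\theta$ exists at all, via the positivity of $(1+h)/\eta$), and a quantitative formula for $E$ that replaces the paper's somewhat loose ``assume decreasing, derive a contradiction, conclude increasing'' step, which as stated only rules out $\partial\eta/\partial\vartheta\leqslant 0$ pointwise rather than directly establishing strict monotonicity. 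The paper's version is shorter and requires no differentiation of the implicit relation, but yours is the more airtight argument; both reach the same conclusion $\frac{\partial Z_1(t)}{\partial\theta}<0$.
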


\begin{proof}
According to \eqref{eq:eta-equation}, we have
\begin{equation}
	\eta=\underline{\eta}\exp\left\{\int_0^T\frac{v^2\left(\alpha_1/\alpha_2-1\right)^2\dif t}{2\sigma^2\left[\frac{\eta}{\vartheta}\alpha_1\sigma^2\mathrm{e}^{\varrho r(T-t)}+1\right]^2}\right\}.
    \label{eq:eta-equation-2}
\end{equation}
Assume $\eta$ is a decreasing function of $\vartheta$. As $\vartheta$ increases, the left-hand side of \eqref{eq:eta-equation-2} will decrease, while the right-hand side of \eqref{eq:eta-equation-2} will increase, which is contradictory. Therefore, $\eta$ must be an increasing function of $\vartheta$, i.e., $\frac{\partial\eta}{\partial\vartheta}>0$. Let $y:=\frac{\eta}{\vartheta}$, and we have
\begin{equation}
	\eta=\underline{\eta}\exp\left\{\int_0^T\frac{v^2\left(\alpha_1/\alpha_2-1\right)^2\dif t}{2\sigma^2\left[y\alpha_1\sigma^2\mathrm{e}^{\varrho r(T-t)}+1\right]^2}\right\}.
\end{equation}
It can be easily proved that $\frac{\partial y}{\partial \eta}<0$. Therefore, we have $\frac{\partial y}{\partial\theta}=\frac{\partial y}{\partial\eta}\cdot\frac{\partial\eta}{\partial\vartheta}\cdot\frac{\partial\vartheta}{\partial\theta}<0$. According to \eqref{equ:opinion},  we have
\begin{align}
    Z_1(t)&=\frac{y\alpha_1\sigma^2\mathrm{e}^{\varrho r(T-t)}}{y\alpha_1\sigma^2\mathrm{e}^{\varrho r(T-t)}+1}\quad\mbox{and}\\
    \frac{\partial Z_1(t)}{\partial y}&=\frac{\alpha_1\sigma^2\mathrm{e}^{\varrho r(T-t)}}{[y\alpha_1\sigma^2\mathrm{e}^{\varrho r(T-t)}+1]^2}>0.
\end{align}
Therefore, we have $\frac{\partial Z_1(t)}{\partial\theta}=\frac{\partial Z_1(t)}{\partial y}\cdot\frac{\partial y}{\partial\theta}<0$. 
\end{proof}

Theorem \ref{the:8} indicates that the investment opinion is a decreasing function of the herd coefficient $\theta$. 
Therefore, the following agent's decision aligns more closely with that of the leading expert when the herd coefficient is larger.

\subsubsection{Initial wealth, excess return rate and volatility}
Next, we analyze the influence of other parameters, including the following agent's initial wealth $x_1$, the excess return rate $v$, and the volatility $\sigma$ of the risky asset. We first study the influence of these parameters on the integral constant $\eta$.
\begin{theorem}\label{the:3}
The integral constant $\eta$ is a decreasing function of $x_1$ with $\frac{\partial\eta}{\partial x_1}<0$. When $\frac{\alpha_1}{\alpha_2}\in\left[0,2\right]$, $\eta$ is a decreasing function of $v$ with $\frac{\partial \eta}{\partial v}<0$. When $\frac{\alpha_1}{\alpha_2}\in\left[1-\frac{\sqrt{3}}{3},1+\frac{\sqrt{3}}{3}\right]$, $eta$ is an increasing function of $\sigma$ with $\frac{\partial\eta}{\partial\sigma}>0$.
\end{theorem}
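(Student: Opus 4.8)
The plan is to obtain the three monotonicity statements from \eqref{eq:eta-equation} by implicit differentiation. Write that relation as $F(\eta,p)=0$, where $F(\eta,p):=\ln\eta+\alpha_1x_1\mathrm{e}^{rT}+\frac{v^2T}{2\sigma^2}-\int_0^T\frac{\vartheta^2v^2(\alpha_1/\alpha_2-1)^2\,\dif t}{2\sigma^2[\eta\mathrm{e}^{\varrho r(T-t)}+\vartheta]^2}$ and $p$ ranges over $x_1$, $v$, $\sigma$. The step common to all three cases is to differentiate in $\eta$: $\frac{\partial F}{\partial\eta}=\frac1\eta+\int_0^T\frac{\vartheta^2v^2(\alpha_1/\alpha_2-1)^2\mathrm{e}^{\varrho r(T-t)}\,\dif t}{\sigma^2[\eta\mathrm{e}^{\varrho r(T-t)}+\vartheta]^3}>0$, so $F$ is strictly increasing in $\eta$; hence the fixed point $\eta$ is unique and, by the implicit function theorem, $\eta$ is a smooth function of $p$ with $\operatorname{sign}(\partial\eta/\partial p)=-\operatorname{sign}(\partial F/\partial p)$. (Alternatively one could argue by contradiction directly on \eqref{eq:eta-equation}, in the spirit of the proof of Theorem~\ref{the:8}.)

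For $x_1$ only the term $\alpha_1x_1\mathrm{e}^{rT}$ depends on $x_1$, so $\partial F/\partial x_1=\alpha_1\mathrm{e}^{rT}>0$ and therefore $\partial\eta/\partial x_1<0$. For $v$, note that $\vartheta=\theta/(\alpha_1\sigma^2)$ does not involve $v$; differentiating under the integral sign gives $\partial F/\partial v=\frac{v}{\sigma^2}\bigl(T-(\alpha_1/\alpha_2-1)^2\int_0^T\frac{\vartheta^2\,\dif t}{[\eta\mathrm{e}^{\varrho r(T-t)}+\vartheta]^2}\bigr)$. Since $\eta>0$ makes the integrand $\frac{\vartheta^2}{[\eta\mathrm{e}^{\varrho r(T-t)}+\vartheta]^2}$ strictly less than $1$, we have $(\alpha_1/\alpha_2-1)^2\int_0^T\frac{\vartheta^2\,\dif t}{[\eta\mathrm{e}^{\varrho r(T-t)}+\vartheta]^2}<(\alpha_1/\alpha_2-1)^2T\le T$ whenever $\alpha_1/\alpha_2\in[0,2]$, so $\partial F/\partial v>0$ and $\partial\eta/\partial v<0$.

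The case of $\sigma$ is the delicate one because $\vartheta=\theta/(\alpha_1\sigma^2)$ itself depends on $\sigma$. I would first substitute this into $F$, rewriting the integrand as $\frac{\theta^2v^2(\alpha_1/\alpha_2-1)^2}{2\sigma^2D(t)^2}$ with $D(t):=\eta\alpha_1\sigma^2\mathrm{e}^{\varrho r(T-t)}+\theta$, and only then differentiate in $\sigma$, which avoids a messier chain rule through $\vartheta$. Carrying out $\partial/\partial\sigma$ and multiplying through by $\sigma^3/v^2>0$ yields $\frac{\sigma^3}{v^2}\frac{\partial F}{\partial\sigma}=-T+\theta^2(\alpha_1/\alpha_2-1)^2\int_0^T\bigl(\frac{1}{D^2}+\frac{2\eta\alpha_1\sigma^2\mathrm{e}^{\varrho r(T-t)}}{D^3}\bigr)\dif t$. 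Using $\eta\alpha_1\sigma^2\mathrm{e}^{\varrho r(T-t)}=D-\theta$, the bracket collapses to $\frac{3D-2\theta}{D^3}$, and since $D>\theta>0$ we have $\frac{3D-2\theta}{D^3}<\frac{3}{D^2}<\frac{3}{\theta^2}$ pointwise; hence $\theta^2(\alpha_1/\alpha_2-1)^2\int_0^T\frac{3D-2\theta}{D^3}\dif t<3(\alpha_1/\alpha_2-1)^2T\le T$ whenever $(\alpha_1/\alpha_2-1)^2\le\frac13$, i.e. $\alpha_1/\alpha_2\in[1-\frac{\sqrt3}{3},1+\frac{\sqrt3}{3}]$, so $\partial F/\partial\sigma<0$ and thus $\partial\eta/\partial\sigma>0$ on that interval.

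The main obstacle is precisely this $\sigma$ case: one must track the $\sigma$-dependence of $\vartheta$ correctly — cleanly handled by eliminating $\vartheta$ before differentiating — and then pick the right pointwise estimate, since collapsing the bracket to $\frac{3D-2\theta}{D^3}$ and bounding it by $3/D^2\le3/\theta^2$ is exactly what produces the constant $\tfrac13$ and hence the stated interval. Everything else reduces to routine differentiation under the integral sign.
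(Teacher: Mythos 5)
Your proposal is correct and follows essentially the same route as the paper: implicit differentiation of the fixed-point equation \eqref{eq:eta-equation}, with the same pointwise bounds $\vartheta^2/[\eta\mathrm{e}^{\varrho r(T-t)}+\vartheta]^2<1$ for the $v$ case and $\theta^2(3D-2\theta)/D^3<3$ (i.e.\ the two terms $1/D^2$ and $2(D-\theta)/D^3$ each bounded via $D>\theta$) for the $\sigma$ case, producing the identical thresholds $1$ and $\tfrac13$. Your only departures are cosmetic but welcome: working with $F=\ln(\cdot)$ so that $\partial F/\partial\eta>0$ cleanly plays the role of the paper's positive denominator $1+\eta\int(\cdots)\,\dif t$, and noting uniqueness of the fixed point, which the paper leaves implicit.
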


\begin{proof}
According to \eqref{eq:eta-equation}, we have
\begin{equation}
    \frac{\partial \eta}{\partial x_1}=-\frac{\eta\alpha_1\mathrm{e}^{rT}}{1+\eta\int_0^T\frac{\vartheta^2v^2(\alpha_1/\alpha_2-1)^2\mathrm{e}^{\varrho r(T-t)}}{\sigma^2[\eta\mathrm{e}^{\varrho r(T-t)}+\vartheta]^3}\dif t}<0.
\end{equation}
When $\frac{\alpha_1}{\alpha_2}\in\left[0,2\right]$, we have $(\frac{\alpha_1}{\alpha_2}-1)^2<1$ and
\begin{align}
    \frac{\partial \eta}{\partial v}&=\frac{\eta\left\{-\frac{vT}{\sigma^2}+\int_0^T\frac{\vartheta^2v(\alpha_1/\alpha_2-1)^2}{\sigma^2[\eta\mathrm{e}^{\varrho r(T-t)}+\vartheta]^2}\dif t\right\}}{1+\eta\int_0^T\frac{\vartheta^2v^2(\alpha_1/\alpha_2-1)^2\mathrm{e}^{\varrho r(T-t)}}{\sigma^2[\eta\mathrm{e}^{\varrho r(T-t)}+\vartheta]^3}\dif t}\notag\\
    &<\frac{\frac{\eta vT}{\sigma^2}[(\alpha_1/\alpha_2-1)^2-1]}{1+\eta\int_0^T\frac{\vartheta^2v^2(\alpha_1/\alpha_2-1)^2\mathrm{e}^{\varrho r(T-t)}}{\sigma^2[\eta\mathrm{e}^{\varrho r(T-t)}+\vartheta]^3}\dif t}<0.
\end{align}
When $\frac{\alpha_1}{\alpha_2}\in\left[1-\frac{\sqrt{3}}{3},1+\frac{\sqrt{3}}{3}\right]$, we have $(\frac{\alpha_1}{\alpha_2}-1)^2<\frac{1}{3}$ and
\begin{align}
    \frac{\partial \eta}{\partial \sigma}&=\frac{\eta\left\{\frac{v^2T}{\sigma^3}-\int_0^T\frac{\theta^2v^2(\alpha_1/\alpha_2-1)^2}{\sigma^3[\eta\alpha_1\sigma^2\mathrm{e}^{\varrho r(T-t)}+\theta]^2}\dif t\right\}}{1+\eta\int_0^T\frac{\theta^2v^2(\alpha_1/\alpha_2-1)^2\alpha_1\mathrm{e}^{\varrho r(T-t)}}{[\eta\alpha_1\sigma^2\mathrm{e}^{\varrho r(T-t)}+\theta]^3}\dif t}\notag\\
    &-\frac{\eta\int_0^T\frac{2\theta^2v^2(\alpha_1/\alpha_2-1)^2\eta\alpha_1\mathrm{e}^{\varrho r(T-t)}}{\sigma[\eta\alpha_1\sigma^2\mathrm{e}^{\varrho r(T-t)}+\theta]^3}\dif t}{1+\eta\int_0^T\frac{\theta^2v^2(\alpha_1/\alpha_2-1)^2\alpha_1\mathrm{e}^{\varrho r(T-t)}}{[\eta\alpha_1\sigma^2\mathrm{e}^{\varrho r(T-t)}+\theta]^3}\dif t}\notag\\
    &>\frac{\frac{\eta vT}{\sigma^2}[1-3(\alpha_1/\alpha_2-1)^2]}{1+\eta\int_0^T\frac{\theta^2v^2(\alpha_1/\alpha_2-1)^2\alpha_1\mathrm{e}^{\varrho r(T-t)}}{[\eta\alpha_1\sigma^2\mathrm{e}^{\varrho r(T-t)}+\theta]^3}\dif t}>0.
\end{align}

So far, we have finished the proof of Theorem \ref{the:3}.
\end{proof}

Given Theorem \ref{the:3}, we can further analyze the impact of these parameters on the investment opinion.

\begin{theorem}\label{the:4}
The investment opinion $Z_1(t)$ is a decreasing function of $x_1$ with $\frac{\partial Z_1(t)}{\partial x_1}<0$. When $\frac{\alpha_1}{\alpha_2}\in\left[0,2\right]$, $Z_1(t)$ is a decreasing function of $v$ with $\frac{\partial Z_1(t)}{\partial v}<0$. When $\frac{\alpha_1}{\alpha_2}\in\left[1-\frac{\sqrt{3}}{3},1+\frac{\sqrt{3}}{3}\right]$, $Z_1(t)$ is an increasing function of $\sigma$ with $\frac{\partial Z_1(t)}{\partial \sigma}>0$. 
\end{theorem}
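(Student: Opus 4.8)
The plan is to reduce every claim to Theorem~\ref{the:3} via the chain rule, using the observation that, for each fixed $t\in\mathcal{T}$, the parameter dependence of $Z_1(t)$ in \eqref{equ:opinion} enters only through the integral constant $\eta$ and through $\vartheta=\theta/(\alpha_1\sigma^2)$ and $\varrho=2-\rho$; in particular neither $\vartheta$ nor $\varrho$ depends on $x_1$ or $v$. The first step is to record the elementary monotonicity of \eqref{equ:opinion} in its integral constant: differentiating $Z_1(t)=\eta\mathrm{e}^{\varrho r(T-t)}/[\eta\mathrm{e}^{\varrho r(T-t)}+\vartheta]$ with respect to $\eta$ gives
\[
\frac{\partial Z_1(t)}{\partial\eta}=\frac{\vartheta\mathrm{e}^{\varrho r(T-t)}}{[\eta\mathrm{e}^{\varrho r(T-t)}+\vartheta]^2}>0,
\]
since $\vartheta>0$.

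For the initial wealth, because $x_1$ enters \eqref{equ:opinion} only through $\eta$, the chain rule yields $\partial Z_1(t)/\partial x_1=(\partial Z_1(t)/\partial\eta)\,(\partial\eta/\partial x_1)$, and Theorem~\ref{the:3} supplies $\partial\eta/\partial x_1<0$, so the product is negative; no restriction on $\alpha_1/\alpha_2$ is needed here, matching the corresponding clause of Theorem~\ref{the:3}. The excess return rate is handled in exactly the same way: $v$ also enters only through $\eta$, so $\partial Z_1(t)/\partial v=(\partial Z_1(t)/\partial\eta)\,(\partial\eta/\partial v)$, and under the hypothesis $\alpha_1/\alpha_2\in[0,2]$ Theorem~\ref{the:3} gives $\partial\eta/\partial v<0$, whence $\partial Z_1(t)/\partial v<0$.

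The volatility case requires slightly more care, since $\sigma$ appears in $Z_1(t)$ both through $\eta$ and through $\vartheta=\theta/(\alpha_1\sigma^2)$, so differentiating in $\eta$ alone is not enough. The device I would use is to clear $\vartheta$ from \eqref{equ:opinion} and write
\[
Z_1(t)=\frac{\eta\sigma^2\alpha_1\mathrm{e}^{\varrho r(T-t)}}{\eta\sigma^2\alpha_1\mathrm{e}^{\varrho r(T-t)}+\theta},
\]
which is manifestly a strictly increasing function of the single aggregate quantity $\eta\sigma^2$ (it has the form $az/(az+b)$ with $a,b>0$ and $z=\eta\sigma^2$). It then suffices to show that $\eta\sigma^2$ is strictly increasing in $\sigma$; but under $\alpha_1/\alpha_2\in[1-\sqrt{3}/3,\,1+\sqrt{3}/3]$ Theorem~\ref{the:3} already gives $\partial\eta/\partial\sigma>0$, and $\sigma^2$ is increasing, so the product is strictly increasing and hence $\partial Z_1(t)/\partial\sigma>0$.

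The main point to get right is this bundling step for $\sigma$: recognizing that the two channels through which $\sigma$ acts can be combined into the monotone variable $\eta\sigma^2$, which is what lets Theorem~\ref{the:3} be applied cleanly rather than forcing a messy direct differentiation. Everything else is a routine chain-rule consequence of Theorem~\ref{the:3} together with the sign of $\partial Z_1(t)/\partial\eta$ computed above.
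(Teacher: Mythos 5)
Your proposal is correct and follows essentially the same route as the paper: compute $\frac{\partial Z_1(t)}{\partial\eta}=\frac{\vartheta\mathrm{e}^{\varrho r(T-t)}}{[\eta\mathrm{e}^{\varrho r(T-t)}+\vartheta]^2}>0$ and chain it through the signs of $\partial\eta/\partial x_1$, $\partial\eta/\partial v$, $\partial\eta/\partial\sigma$ from Theorem~\ref{the:3}. Your handling of the volatility case is in fact more careful than the paper's: the paper writes $\frac{\partial Z_1(t)}{\partial\sigma}=\frac{\partial Z_1(t)}{\partial\eta}\cdot\frac{\partial\eta}{\partial\sigma}$ and silently drops the second channel through $\vartheta=\theta/(\alpha_1\sigma^2)$, whereas your rewriting of $Z_1(t)$ as an increasing function of the aggregate $\eta\sigma^2$ accounts for both channels and shows they reinforce each other, so the stated sign is justified rather than merely asserted.
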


\begin{proof}
According to \eqref{equ:opinion} and \eqref{eq:opinion-dynamics-one}, we can prove that
\begin{equation}
    \frac{\partial Z_1(t)}{\partial\eta}=\frac{\vartheta\mathrm{e}^{\varrho r(T-t)}}{[\eta \mathrm{e}^{\varrho r(T-t)}+\vartheta]^2}>0.
    \label{eq:wealth-opinion}
\end{equation}
Therefore, we have $\frac{\partial Z_1(t)}{\partial x_1}=\frac{\partial Z_1(t)}{\partial\eta}\cdot\frac{\partial\eta}{\partial x_1}<0$. When $\frac{\alpha_1}{\alpha_2}\in\left[0,2\right]$, we have $\frac{\partial Z_1(t)}{\partial v}=\frac{\partial Z_1(t)}{\partial\eta}\cdot\frac{\partial\eta}{\partial v}<0$. When $\frac{\alpha_1}{\alpha_2}\in\left[1-\frac{\sqrt{3}}{3},1+\frac{\sqrt{3}}{3}\right]$, we have $\frac{\partial Z_1(t)}{\partial\sigma}=\frac{\partial Z_1(t)}{\partial\eta}\cdot\frac{\partial\eta}{\partial\sigma}>0$. 
\end{proof}

From Theorem \ref{the:4}, we can draw the following conclusions.

The following agent's investment opinion is a decreasing function of his/her initial wealth $x_1$, indicating that when the following agent has a higher initial wealth, the investment opinion $Z_1(t)$ is smaller, and from \eqref{eq:rational-decomposition-one}, his/her decision tends to converge more to that of the leading expert. This can also be intuitively explained using the diminishing margin feature of the utility function. As the initial wealth increases, the additional utility gained from terminal wealth diminishes. Therefore, the following agent pays more attention to reducing the disparity from the leading expert's decision, causing the convergence of the two agents' decisions.

Similarly, when the two agents' risk aversion coefficients are close, a higher excess return rate $v$ and lower volatility $\sigma$ result in a smaller investment opinion from Theorem \ref{the:4}. Therefore, from \eqref{equ:opinion}, the following agent's decision converges more to that of the leading expert. This can be intuitively explained as follows. With a larger $v$ and a smaller $\sigma$, the risky asset achieves a higher expected utility of terminal wealth. Thus, from the diminishing margin feature of the utility function, the following agent's decision tends to converge more to that of the leading expert.

\section{Numerical Experiments}
\label{sec:experiment}

In this section, we conduct numerical experiments to validate our analyses in Section \ref{sec:solution} and Section \ref{sec:analysis}.

\begin{figure}[!t]
\centering
\includegraphics[width=\linewidth]{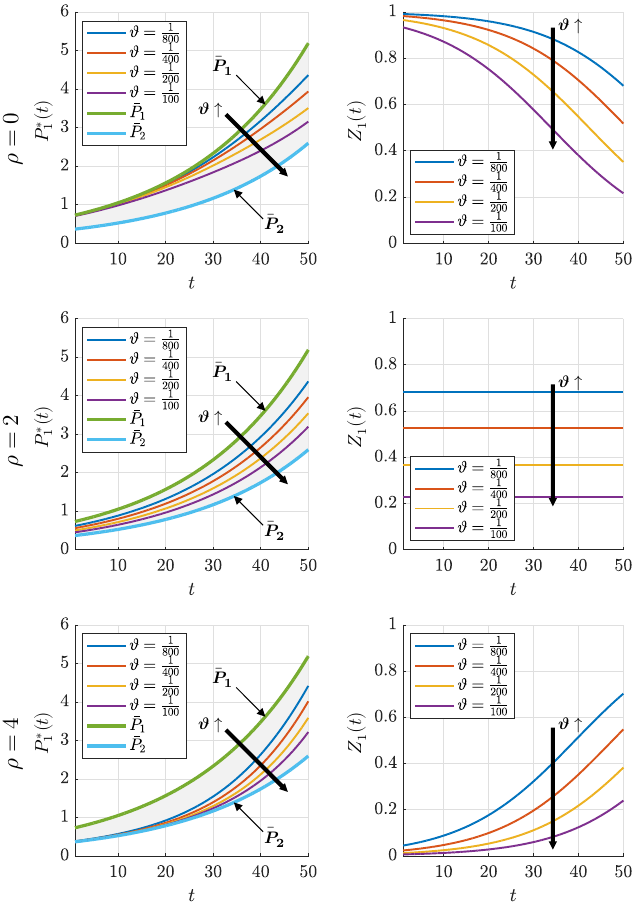}
\caption{Optimal decisions and investment opinions with different decay rates.}
\label{fig:fig2}
\end{figure}

\subsection{Parameter Settings}
We choose a temporal span covering fifty years from January 1972 to December 2022, i.e., $T=50$. During this investment period, we collect daily closing prices of the Dow Jones Industrial Average to represent the prices of the risky asset. Utilizing parameter estimation of the geometric Brownian motion, we determine the expected return $\mu$ to be $0.07$ and volatility $\sigma$ to be $0.17$ for the risky asset. We use the daily average of the interest rates of U.S. 1-Year Treasury Bills in 2022 and 2023 to represent the risk-free interest rate $r$, which is approximately equal to $0.04$. From the prior work in \cite{yuen2001estimation}, we set $\alpha_1=0.2$ and $\alpha_2=0.4$ as the risk aversion coefficients for the following agent and leading expert, respectively. We set the modified herd coefficient $\vartheta$ to values of $\frac{1}{800}$, $\frac{1}{400}$, $\frac{1}{200}$, and $\frac{1}{100}$. We set the following agent's initial wealth as $X_{1,0}=0$. We observe the same trend for other values of the parameters. 

\subsection{Optimal Decision}
We consider three cases with different decay rates: $\rho=0$, $\rho=2$, and $\rho=4$. Experiment results are shown in Fig. \ref{fig:fig2}. It can be seen that A1's optimal decision consistently remains within the region bounded by the rational decisions of the two agents $\{\bar{P}_1(t)\}_{t\in\mathcal{T}}$ and $\{\bar{P}_2(t)\}_{t\in\mathcal{T}}$. Furthermore, the optimal decision progressively approaches A2's rational decision $\{\bar{P}_2(t)\}_{t\in\mathcal{T}}$ as the herd coefficient increases. When $\rho=0$, the optimal decision increasingly converges towards A2's rational decision $\{\bar{P}_2(t)\}_{t\in\mathcal{T}}$, and the investment opinion decreases over time. When $\rho=2$, the optimal decision maintains a constant distance from the rational decisions of the two agents, and the investment opinion remains constant. When $\rho=4$, the optimal decision increasingly converges towards A1's rational decision $\{\bar{P}_1(t)\}_{t\in\mathcal{T}}$, and the investment opinion increases over time. These observations agree with our analyses of the optimal decision and investment opinion in Theorem \ref{the:6}.

\subsection{Parameters' Influence on the Optimal Decision}
Next, we study the parameters' influence on A1's optimal decision.
As shown in Fig. \ref{fig:fig2}, when $\vartheta$ increases, the investment opinion $Z_1(t)$ decreases. Therefore, $Z_1(t)$ is a decreasing function of the modified herd coefficient $\vartheta$. Because $\theta=\alpha_1\sigma^2\vartheta$, $Z_1(t)$ is also a decreasing function of the herd coefficient $\theta$, which validates the correctness of Theorem \ref{the:8}. 

Keeping other parameters constant and ensuring $\frac{\alpha_1}{\alpha_2} \in \left[1-\frac{\sqrt{3}}{3},1+\frac{\sqrt{3}}{3}\right]$, we plot the curves of the integral constant $\eta$ with respect to the initial wealth $x$, excess return rate $\mu$, and volatility $\sigma$. As shown in Fig. \ref{fig:fig3}, $\eta$ increases with the decrease of $x$ and $v$, and increases with the increase of $\sigma$, thereby validating the correctness of Theorem \ref{the:3}. 

\section{Conclusion}
\label{sec:conclusion}
In this paper, we formulate a dual-agent optimal investment problem considering herd behaviour and introduce the average deviation term in the traditional Merton problem's objective functional to measure the distance between the two agents' decisions. We obtain the analytical solution using the variational method and quantitatively analyze the impact of herd behaviour on the following agent's optimal decision using the rational decision decomposition. Furthermore, we introduce the concept of investment opinion to quantify the following agent's preference for his/her own rational decision over that of the leading expert. We validate our analyses through numerical experiments on real stock data. 

\begin{figure}[!t]
\centering
\subfloat[Initial wealth]{\includegraphics[width=0.33\linewidth]{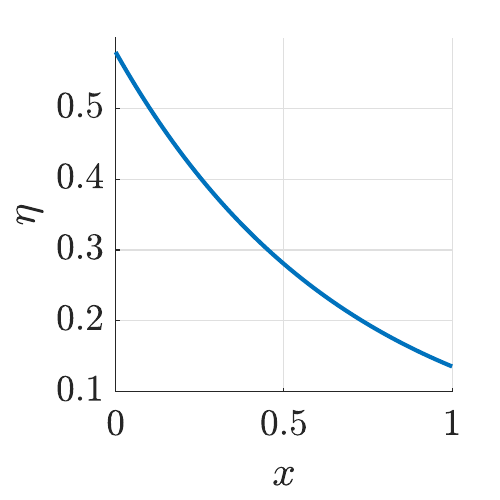} \label{fig:fig3b}}
\subfloat[Excess return rate]{\includegraphics[width=0.33\linewidth]{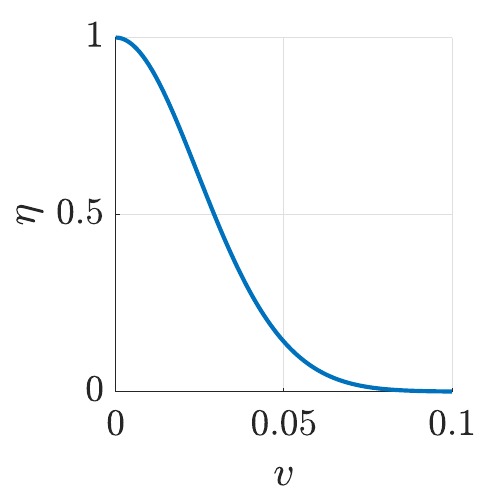} \label{fig:fig3c}}
\subfloat[Volatility]{\includegraphics[width=0.33\linewidth]{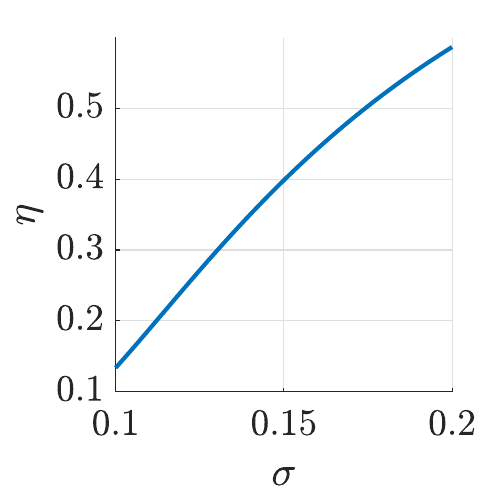} \label{fig:fig3d}}
\caption{Parameters' influence on the integral constant $\eta$.}
\label{fig:fig3}
\end{figure}

\bibliography{bibliology}
\bibliographystyle{IEEEtran}

\end{document}